\newcommand{\eqmathbox}[2][N]{\eqmakebox[#1]{$\displaystyle#2$}}
\begin{document}

\title[Diffraction and self-similar fractal strings]
{Diffraction measures and patterns of the complex dimensions of self-similar fractal strings. I. The lattice case}
\author[M. L. Lapidus, M. van Frankenhuijsen, and E. K. Voskanian]
{Michel L. Lapidus, Machiel van Frankenhuijsen, \\ and Edward K. Voskanian}
\address [Michel L. Lapidus]{Department of Mathematics, University of California, Riverside, 900 University
Ave, Riverside, 92521, California, USA.} \email{lapidus@math.ucr.edu}\
\address[Machiel van Frankenhuijsen]{Department of Mathematics, Utah Valley University, 800 W University
Pkwy, Orem, 84058, Utah, USA.} \email{vanframa@uvu.edu}
\address[Edward K. Voskanian]{Department of Mathematics, Norwich University, 158 Harmon Dr,
Northfield, 05663, Vermont, USA.} \email{evoskani@norwich.edu}

\keywords{Quasicrystals, fractal geometry, complex dimensions, lattice and nonlattice self-similar fractal
strings, mathematical diffraction, discrete diffraction pattern, autocorrelation measure,
diffraction measure.}

\subjclass[2000]{52C23; 28A80; 28A33}

\begin{abstract}
We give a generalization of Lagarias' formula for diffraction by ideal crystals, and we apply it to the lattice case, in preparation for addressing the problem of quasicrystals and complex dimensions posed by Lapidus and van Frankenhuijsen concerning the quasiperiodic properties of the set of complex dimensions of any nonlattice self-similar fractal string. More specifically, in this paper, we consider the case of the complex dimensions of a lattice (rather than of a nonlattice) self-similar string and show that the corresponding diffraction measure exists, is unique, and is given by a suitable {\it continuous} analogue of a discrete Dirac comb. We also obtain more general results concerning the autocorrelation measures and diffraction measures of generalized idealized fractals associated to possibly degenerate lattices and the corresponding extension of the Poisson Summation Formula.
\end{abstract}
 
\maketitle
 
\theoremstyle{plain}%
\newtheorem{theorem}{Theorem}
\newtheorem{proposition}[theorem]{Proposition}%

\theoremstyle{remark}%
\newtheorem{example}{Example}%
\newtheorem{remark}{Remark}%
\newtheorem{problem}{Problem}%
\newtheorem{corollary}[theorem]{Corollary}%

\theoremstyle{definition}%
\newtheorem{definition}{Definition}%

\section{Introduction} \label{introduction}

Far field diffraction (or kinematic diffraction) is when light waves from a far away source are pointed at, say, an alloy material, which are then scattered (or diffracted) by the alloy's constituent atoms. The diffracted light recombines and produces an interference (or diffraction) pattern; see, e.g., \cite{lipson2010optical,cowley1995diffraction}). A quasicrystal has a diffraction pattern consisting of (essentially) bright spots, and with rotational symmetries that are incompatible with periodic atomic structures (see, e.g., \cite[Theorem 2.1]{senechal1996quasicrystals}). Put another way, a quasicrystal is a nonperiodic material with enough order for giving rise to discrete diffraction patterns. Up until the discovery of quasicrystals by Dan Shechtman \cite{blech1984metallic} in 1982,\footnote{The first known quasicrystal was an alloy processed in a lab. Over twenty years after Shechtman's discovery, the first naturally occurring quasicrystal was reported to be found in the Koryak Mountains in Russia \cite{bindi2009natural,bindi2011icosahedrite}.} such materials were believed to be impossible matter. The emergence of quasicrystals led to new mathematical ideas to understand their formation (see, e.g., \cite{baake2013aperiodic,baake2017aperiodic,lapidus2008in,senechal1996quasicrystals}, and the references therein). The present paper is concerned with a particular mathematical analogue of kinematic diffraction (explained in Section \ref{background}) that we use to study potential mathematical models for quasicrystals associated to fractals.

A set $X \subset \mathbb{R}^n$ is called {\it uniformly discrete} if there exists a constant $r > 0$ such that for every $x \in X$, $\{y \in X\colon \left\lvert y - x \right\rvert < r\} = \{x\}$. A set $X \subset \mathbb{R}^n$ is called {\it relatively dense} if there exists a constant $R > 0$ such that for all $x \in \mathbb{R}^n$, $\{y \in X\colon \left\lvert y - x \right\rvert \leq R\} \neq \emptyset$. Any set $X \subset \mathbb{R}^n$ that is both uniformly discrete and relatively dense is called a {\it Delone} set. Cut-and-project sets (or model sets) are the special kinds of Delone sets that are formed by taking a slice of a lattice (see Definition \ref{lattice def}) and projecting it down into lower dimensions. They are the most popular mathematical models for quasicrystals (see, e.g. Moody's survey on model sets, which appeared in \cite{axel2000from}). It is well-known that the diffraction measure for a cut-and-project set is discrete. More recent results on mathematical diffraction by cut-and-project sets can be found in, for example, see \cite{lenz2007pure, richard2017pure,richard2017a} and the relevant references therein. 
  
In the aforementioned mathematical approach to diffraction, a quasicrystal is defined by the fact that the diffraction measure is purely discrete (i.e., is a Dirac comb); see, e.g., Senechal's book \cite[Chapter 3.5]{senechal1996quasicrystals}. In \cite[Problem 3.22, p. 89]{lapidus2012fractal}, partly motivated by \cite{lapidus2008in}, the first two authors have asked whether there exists an appropriate notion of a generalized quasicrystal according to which the complex dimensions of a nonlattice self-similar string are such a (generalized) quasicrystal.  

In the present paper, we show, in particular, that the complex dimensions of {\it lattice} self-similar strings form a generalized quasicrystal, in the sense that their diffraction measure is well defined and is a kind of (generalized) {\it continuous} (rather than discrete) {\it Dirac comb}. We establish this statement by first extending to the case of degenerate lattices Lagarias' formula for the diffraction measure of an idealized quasicrystal \cite[Theorem 2.7]{lagarias2000mathematical}. 

The structures considered in the present paper are not Delone sets, as they lack the property of relative density. Our goal is to compute diffraction measures for sets that do not fill up all of Euclidean space, which is motivated by the need to understand the diffraction properties of the sets of complex dimensions of self-similar fractal strings. In the literature, there is some work focused on computing diffraction patterns of fractal sets (see, e.g., \cite{godreche1990multifractal,dettmann1993structure,baake2020fourier}, and the relevant references therein). Our work is not focused on diffraction patterns of fractal sets. Instead, we are interested in diffraction patterns from complex dimensions of certain fractals, which are discrete subsets of two-dimensional Euclidean space.  

Since, according to \cite[Theorem 3.18, p. 84]{lapidus2012fractal} (see also \cite{lapidus2003complex}), nonlattice self-similar strings and their complex dimensions can be approximated arbitrarily closely by a sequence of lattice strings which, therefore, exhibit quasiperiodic patterns (see \cite[Chapter 3, esp., Section 3.4]{lapidus2012fractal}) --- and similarly, for the associated complex dimensions --- it is expected that our present results should help us in future work (see, e.g., \cite{lapidus2024diffraction}) to address the aforementioned open problem about the generalized ``quasicrystality'' of the quasiperiodic patterns of complex dimensions of nonlattice self-similar strings.

\subsection{Ideal crystals} 
\hfill

\noindent The following well-known definition can be found, e.g., in Bremner's book \cite[Definition 1.15]{bremner2011lattice}.  
\begin{definition} \label{lattice def}
Let $x_1, x_2, \dots, x_m$ be a set of $m \geq 1$ linearly independent vectors in $\mathbb{R}^m$. The {\it $m$-dimensional lattice} spanned by these vectors is
\[ \Lambda = \left\{\sum_{j = 1}^m a_jx_j\colon a_1, a_2, \dots, a_m \in \mathbb{Z} \right\}. \]
\end{definition}
For $j = 1, \dots, m$ we write $x_{j} = \left(x_{j1}, x_{j2}, \dots, x_{jm}\right)$ and form the $m \times m$ {\bf basis matrix} $B = \left(x_{jk} \right)$. The {\it determinant} of the lattice $\Lambda$, denoted by $\det(\Lambda)$, is given by
\[ \det(\Lambda) = \lvert \det(B) \rvert. \]
Note that the determinant does not depend on the choice of basis \cite[Exercise 1.12]{bremner2011lattice}. Geometrically, the determinant is the volume of the parallelepiped formed by the vectors $x_1, x_2, \dots, x_m$.

In \cite[Definition 1.3]{lagarias2000mathematical}, Lagarias defines an {\it ideal crystal} in $\mathbb{R}^m$ to be a finite number of translates of an $m$-dimensional lattice in $\mathbb{R}^m$. That is, if $\Lambda \subset \mathbb{R}^m$ is a lattice and $F$ is a finite subset of $\mathbb{R}^m$, the set
\[ X = \Lambda + F = \{y + f\colon y \in \Lambda, f \in F\} \]
is called an {\it ideal crystal}. An ideal crystal is a kind of generalized lattice, in that it is an infinite repetition of a single point, or of a finite collection of distinct points; see Figure \ref{ideal crystal full rank}.
\begin{figure}[tb]
\centering
\begin{subfigure}{0.9\textwidth}
    \centering
    \includegraphics[width=0.9\textwidth]{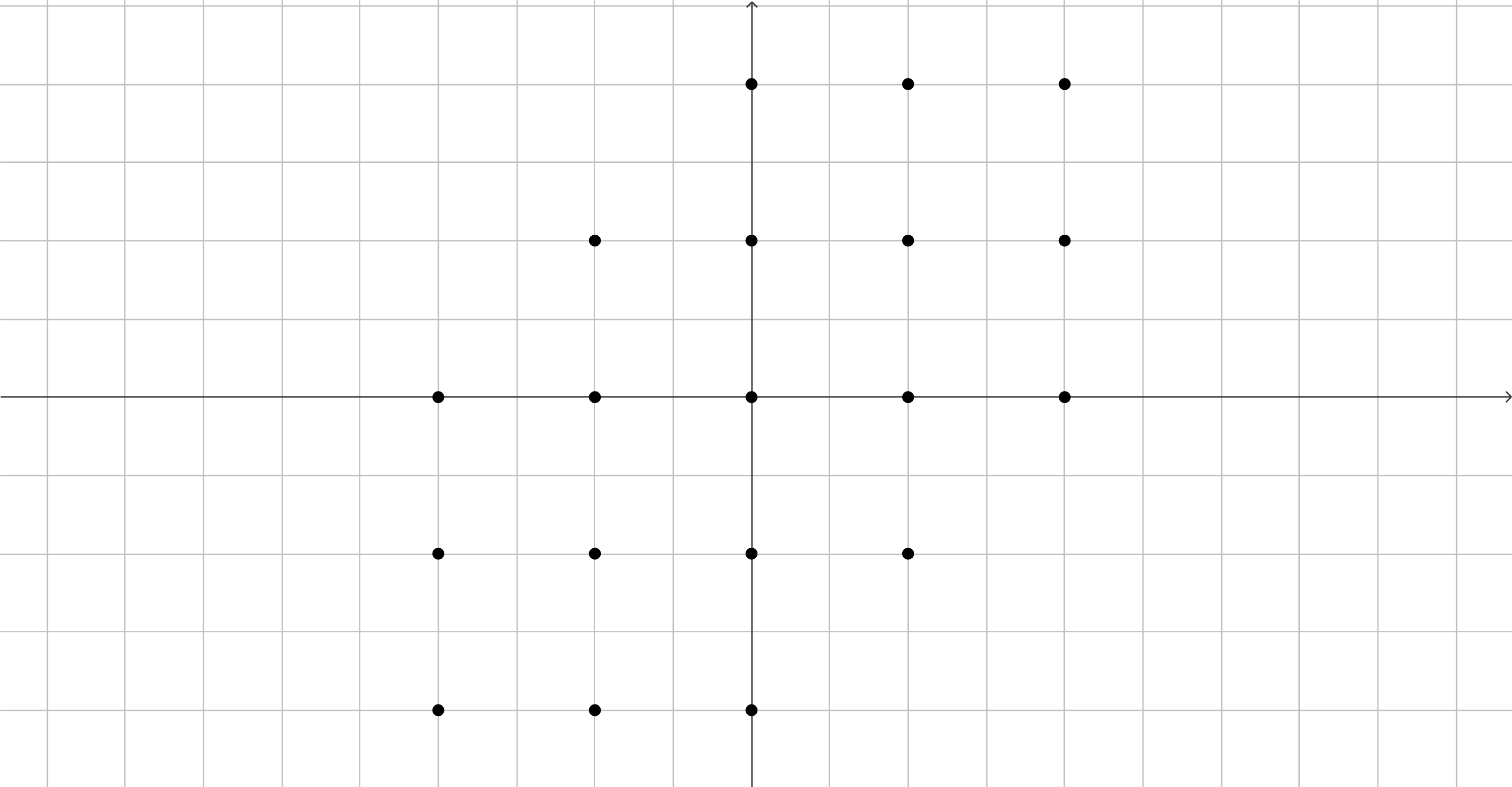}
    \caption{}
\end{subfigure}
\hfill
\begin{subfigure}{0.9\textwidth}
    \centering
    \includegraphics[width=0.9\textwidth]{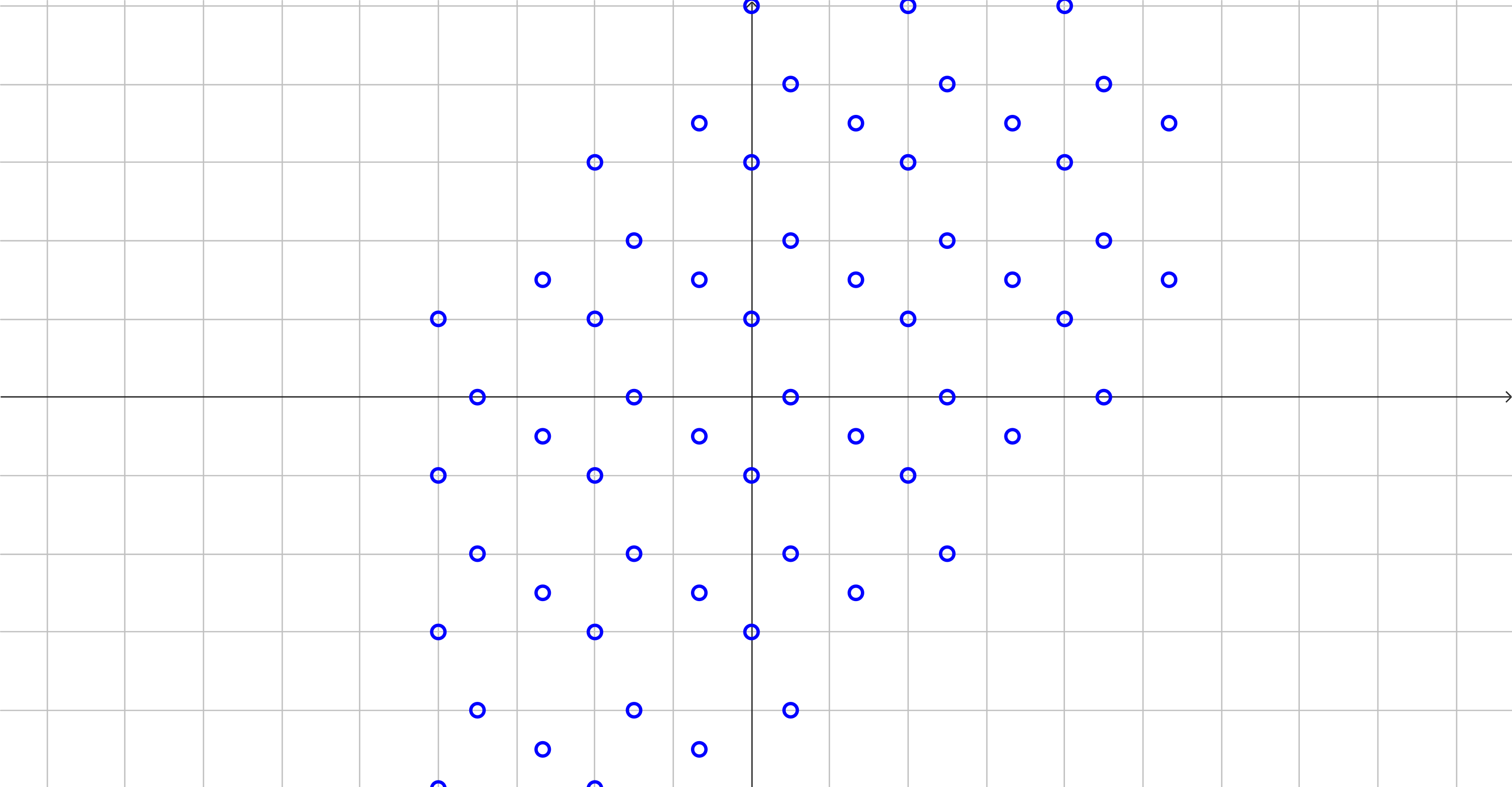}
    \caption{}
    \label{fig:full rank}
\end{subfigure}
\caption{(a) Part of a lattice in the plane; and (b) the ideal crystal from this lattice (from part (a)), along with the finite set $F =~ \{(4/3,3/2),(1/2,0),(0,1)\}$.}
\label{ideal crystal full rank}
\end{figure}
Lagarias views ideal crystals as idealized infinite periodic materials, and from this perspective, he has shown, using a certain mathematical analogue of diffraction, that ideal crystals have discrete mathematical diffraction patterns \cite[Theorem 2.7]{lagarias2000mathematical}; that is, their diffraction measure exists and is equal to a suitable Dirac comb (i.e., a finite or countable weighted sum of Dirac measures). 

\subsection{Generalized Ideal Crystals} \label{main results}
In the present paper, we study translations in Euclidean space of an $m$-dimensional lattice in $d + m$ dimensions (see Definition \ref{genIdeal} and Figure \ref{ideal crystal}). 
\begin{figure}[tb]
\centering
\begin{subfigure}{0.45\textwidth}
    \centering	
    \includegraphics[width=0.95\textwidth]{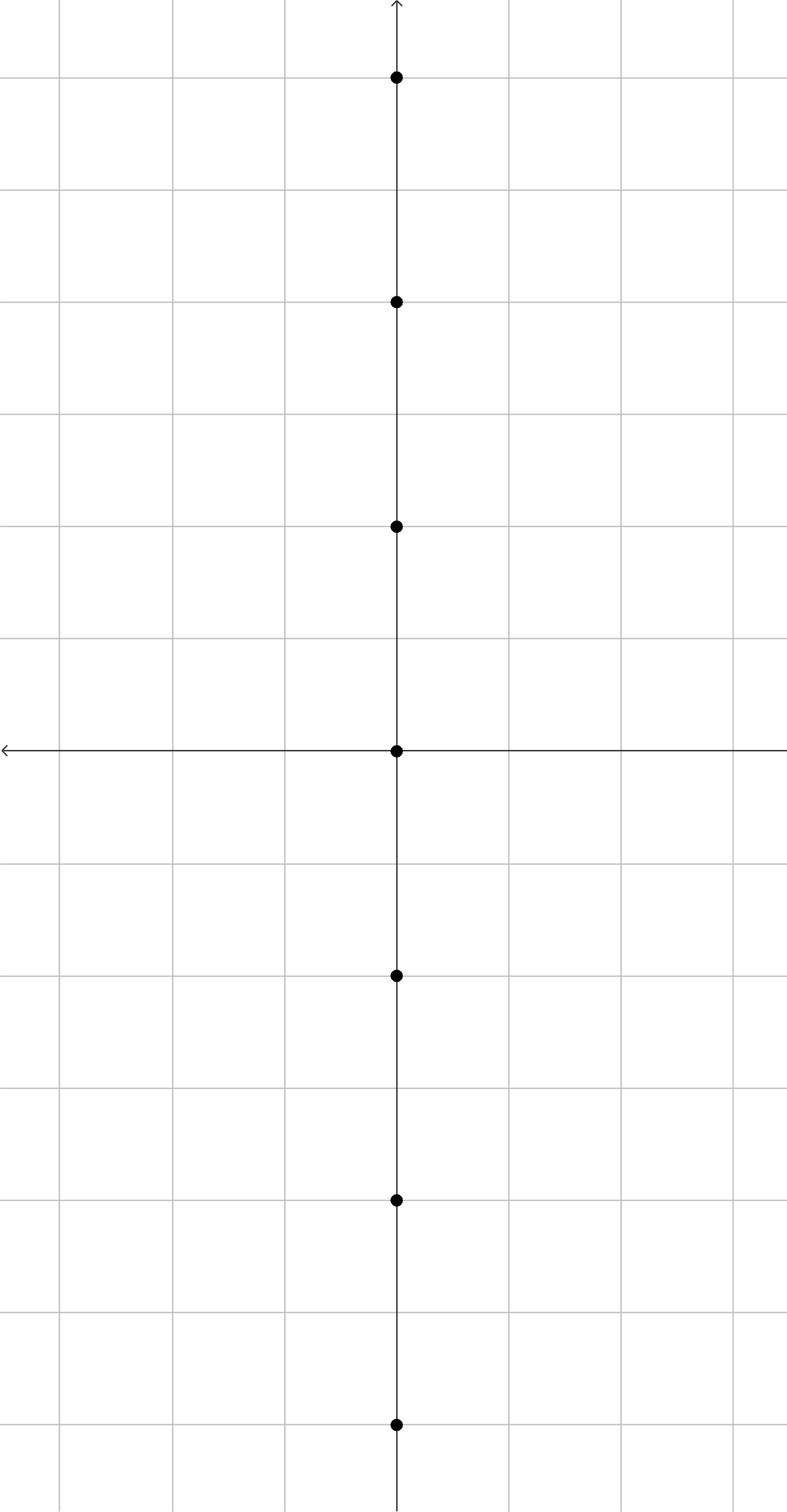}
    \caption{}
\end{subfigure}
\hfill
\begin{subfigure}{0.45\textwidth}
    \centering
    \includegraphics[width=0.95\textwidth]{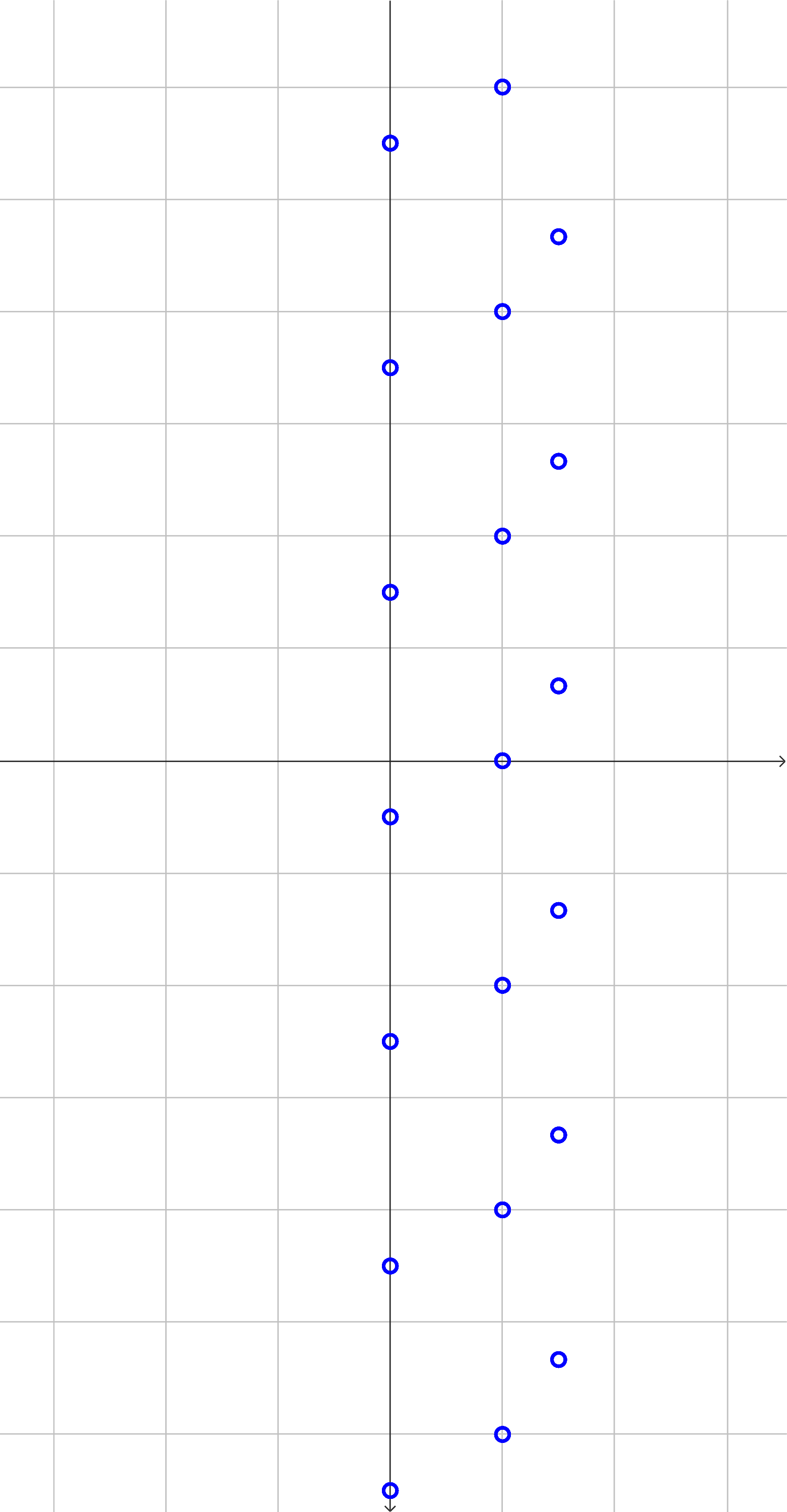}
    \caption{}
    \label{fig:rank-1}
\end{subfigure}
\caption{(a) Part of a $1$-dimensional lattice in the plane; and (b) the ideal crystal from this lattice (from part (a)), along with the finite set $F =~ \{(3/2,-5/4),(0,-1/2),(1,0)\}$.}
\label{ideal crystal}
\end{figure}
We do this by first taking a lattice $\Lambda \subset \mathbb{R}^m$ and identifying it with its image $T(\Lambda)$ under the map $T: \mathbb{R}^{m} \to \mathbb{R}^{d + m}$ given by 
\begin{equation} \label{id} 
T(a) = (\underbrace{0, \dots, 0}_{d\texttt{-}\text{times}},a_1, a_2, \dots, a_m),\text{ for all }(a_1, \dots, a_m) \in \mathbb{R}^m.
\end{equation}
\begin{definition} \label{genIdeal}
Let $\Lambda$ be a lattice in $\mathbb{R}^m$. Let $d$ be a nonnegative integer, and let $F \subset \mathbb{R}^{d + m}$ be a finite set of translation vectors. We call the set
\[ X = T(\Lambda) + F \]
a $(d + m)$-dimensional {\it ideal crystal} (or a possibly degenerate {\it ideal quasicrystal\/}), and we say that the ideal crystal has {\it rank} equal to $m$.
\end{definition}
By utilizing a suitable noncompact version of the classical Poisson Summation Formula (see Theorem \ref{LatticegeneralPSF}), we adapt Lagarias' theorem to the more general ideal crystals defined above (see Theorem \ref{idealDiff}).   

Our work is motivated by an open problem \cite[Problem 3.22]{lapidus2012fractal} (see Problem~\ref{open problem} below), stated by the first two authors, which connects the mathematical theory of quasicrystals and aperiodic order to the one-dimensional theory of fractal geometry and complex dimensions. Lagarias states \cite[p. xi]{baake2017aperiodic}: ``The mathematical study of aperiodically ordered structures is a beautiful synthesis of geometry, analysis, algebra and number theory.'' The sequel \cite{lapidus2024diffraction} of the present paper will show a deep connection between Problem \ref{open problem} and the long-standing problem of simultaneous Diophantine approximation (see, e.g., \cite{bremner2011lattice,lagarias1985the,lagarias1982best,lapidus2003complex,lapidus2012fractal,lenstra1982factoring,lapidus2021quasiperiodic}).  

\section{Diffraction and Complex Dimensions of Fractal Strings} \label{background}

In this section, we describe a recipe for mathematical diffraction from \cite[Example 2.1]{hof1995on}, and in Section \ref{results}, and we adapt it to ideal crystals in $\mathbb{R}^{d + m}$ with rank $m$ ($d > 0$). We refer to these as lower-rank ideal crystals. Note that the theory of mathematical diffraction used in this paper was initiated by Dworkin \cite{dworkin1993spectral}, and then extended by Hof \cite{hof1992quasicrystals,hof1995on,hof1995diffraction,hof1997diffraction}. For more information on the mathematical theory of diffraction, see, e.g., \cite{senechal1996quasicrystals}, \cite[Chapter 9]{baake2013aperiodic} and \cite{baake2017aperiodic}, along with the references therein. We also give a brief summary of the parts of the one-dimensional theory of fractal strings and complex dimensions which are directly related to the goal of the present paper and its sequel, which was mentioned at the end of Section \ref{introduction}. 

\subsection{Mathematical diffraction}
\hfill

\noindent The Fourier transform plays a central role in physical diffraction. If $\rho$ is a function describing the atomic nature of a material, then its diffraction pattern is described by $\left\lvert \widehat{\rho} \right\rvert^2$, where $\widehat{\rho}$ denotes the Fourier transform of $\rho$. This is summarized by the following {\it Wiener diagram}:
\begin{equation}
\begin{tikzcd}[row sep=huge,column sep=huge]
\eqmathbox{\rho} \arrow{dr}{} \arrow[d,"\widehat{\ }"] \arrow[r,"\ast"] & \eqmathbox[M]{\rho \ast \widetilde\rho} \arrow[d,"\widehat{\ }"]\\
\eqmathbox{\widehat{\rho}} \arrow[r, "\left\lvert \cdot \right\rvert^2"] & \eqmathbox[M]{\left\lvert \widehat{\rho} \right\rvert^2}
\end{tikzcd} 
\label{wiener}
\end{equation}
where $\rho$ is integrable, $\widetilde{\rho}(x) = \overline{\rho(-x)}$, $\widehat{\rho}$ is the Fourier transform of $\rho$, as before, and $\rho \ast \widetilde{\rho}$ is the convolution product of $\rho$ and $\widetilde{\rho}$. Note that this diagram commutes as a result of the convolution theorem and the identity $\widehat{\widetilde{\rho}} = \overline{\widehat{\rho}}$, the complex conjugate of $\widehat{\rho}$.   

Let $X$ be a (countably infinite) set in $\mathbb{R}^n$ with the property that for any compact set $K \subset \mathbb{R}^n$, there exists a positive and finite constant $a_K$ such that the number of points in 
\[ x + K := \{x + k\colon k \in K\} \] 
is bounded from above by $a_K$, and that we have the same uniform bound, regardless of the choice of $x \in X$. Mathematical diffraction follows the path from $\rho$ to $\left\lvert \widehat{\rho} \right\rvert^2$ in the Wiener diagram, except for the fact that $\rho$ is now replaced with a measure, defined below. In \cite{hof1995on}, Hof justifies why the path through the autocorrelation $\rho \ast \widetilde{\rho}$ should be taken.

Denote by $C_c(\mathbb{R}^n)$ the vector space of complex-valued continuous functions on~$\mathbb{R}^n$ with compact support. 

\begin{definition}
A ({\it complex}) {\it measure} $\mu$ on $\mathbb{R}^n$ is a linear functional on $C_c(\mathbb{R}^n)$ with the property that for every compact subset $K \subset \mathbb{R}^n$, there exists a positive and finite constant $b_K$ such that
\[ \left\vert \mu(f) \right\vert \leq b_K\sup\{\left\lvert f(x) \right\rvert\colon x \in K \}, \]
for all $f \in C_c(\mathbb{R}^n)$ with support contained in $K$. As a result, of course, $\mu$ is nothing else but a {\it continuous} linear functional on $C_c(X)$.

The {\it conjugate} of the measure $\mu$, denoted by $\overline{\mu}$, is another measure given by
\[ \overline{\mu}(f) =  \overline{\mu(\overline{f})},\text{ for all }f \in C_c(\mathbb{R}^n). \]
The measure $\mu$ is said to be {\it real} if $\overline{\mu} = \mu$, and a real measure is called {\it positive} if $\mu(f) \geq 0$, for all nonnegative functions $f \in C_c(\mathbb{R}^n)$. The latter condition guarantees the continuity of $\mu$. Every measure $\mu$ has an associated positive measure, denoted by $\left\lvert \mu \right\rvert$, called the {\it total variation measure} of $\mu$, with the property that for every $f \in C_c(\mathbb{R}^n)$ such that $f(x) \geq 0$ for all $x \in \mathbb{R}^n$, we have that
\[ \left\lvert \mu(f) \right\rvert \leq \left\lvert \mu \right\rvert(f). \]
\end{definition}

In the context of mathematical diffraction, the function $\rho$ in the Wiener diagram (\ref{wiener}) is replaced with the measure
\[ \mu := \sum_{x \in X} \delta_x, \]
which serves as an analytical representation of the diffracting material $X$. The measure $\mu$ is commonly referred to as a {\it Dirac comb}, and it plays the role of the function $\rho$ in the Wiener diagram. Note that in the present paper, we only consider diffraction by sets $X$ that are sufficiently regular so that $\mu$ is always a tempered distribution. In the following subsection, we describe the analogue (in the present context of mathematical diffraction) of the autocorrelation in the Wiener diagram, which is $\rho \ast \tilde{\rho}$ in the upper right-hand corner of (\ref{wiener}). 

\subsubsection{The Autocorrelation Measure} \label{auto section}

The autocorrelation measure in the Wiener diagram~(\ref{wiener}) describes the nature of the displacement vectors between atoms. It is worth noting that in physics, the inter-atomic interactions between atoms is largely influential in the types of diffraction patterns obtained. For any continuous function $f$, the function $\widetilde{f}$ is given by $\widetilde{f}(x) := \overline{f(-x)}$, for any $x \in \mathbb{R}^n$. The natural extension to measures is given by $\widetilde{\mu}(f) := \mu(\tilde{f})$, for all $f \in C_c(\mathbb{R}^n)$. The convolution, $\mu \ast \nu$, of two (suitable) measures $\mu$ and $\nu$ is given by the integral
\[ (\mu \ast \nu)(f) = \int f(x + y)\, d\mu(x)\, d\nu(y), \]
whenever it exists. We note that a finite measure is necessarily translation bounded (defined just below), and that the convolution of two measures is again a measure if at least one of the measures is finite with the other being only translation bounded.

\begin{definition}
A measure $\mu$ is said to be {\it translation bounded} if for every compact $K \subset \mathbb{R}^n$, 
\[ \sup_{x \in X} \left\lvert \mu \right\rvert(x + K) < \infty. \] 
\end{definition}

Let $C_L$ denote the centered $n$-cube with side length equal to $L$ (and center the origin in $\mathbb{R}^n$). For example, in $\mathbb{R}$, $C_2 = [-1,1]$. We denote the restriction of a measure $\mu$ to $C_L$ by $\mu_L$. Since $\mu_L$ is a finite measure, the measure
\[ \gamma_{\mu}^L = \frac{\mu_L \ast \widetilde{\mu}_L}{L^n} \]
is well defined. Convergence of measures in the {\it vague sense} means that 
\[ \mu_j(f) \to \mu(f),\text{ as }j \to \infty, \] 
for all $f \in C_c(\mathbb{R}^n)$, and where $\mu$ and $\mu_j$ (for each $j \in \mathbb{N}$) are measures on $\mathbb{R}^n$. Furthermore, every vague limit point of the family $\{\gamma_{\mu}^L\colon L > 0\}$ is called an {\it autocorrelation measure} of $\mu$. In the present paper, all of the Dirac combs that are considered will have exactly one autocorrelation measure. 

Let $A := X - X$ denote the set $\{x - y\colon x, y \in X\}$ consisting of all possible displacement vectors between points of $X$, and assume that $A$ is {\it locally finite}. That is, assume that any compact subset $K \subset \mathbb{R}^n$ contains only finitely many points of $A$. For a fixed length $L$, we count the number of pairs $(x_1,x_2) \in C_L$ with the property that the displacement vector $x_2 - x_1$ from $x_1$ to $x_2$, is equal to a prescribed displacement vector $a \in A$. More precisely, we seek a formula for
\begin{equation} \label{count}
N_L(a) := \#\{(x_1,x_2) \in X \times X\colon x_1, x_2 \in C_L \text{ and } a = x_2 - x_1\},
\end{equation}
which is the number of ordered pairs in $C_{L}$ with displacement vector (or interaction) equal to $a$. For example, when we count pairs $(x_1,x_2)$ with displacement vector equal to zero, we are really just counting the number of points of $X$ in the $n$-cube $C_L$. 

As shown by Hof in \cite[Example 2.1]{hof1995on}, assuming that the limit 
\begin{equation} \label{freqLimit}
n_a := \lim_{L \to \infty} \frac{N_L(a)}{L^{n}} 
\end{equation} 
exists and is positive as well as finite, for all $a \in X - X$, then the complex measure $\gamma_X$ given by
\begin{equation} \label{auto}
\gamma_{X}(f) := \sum_{a \in A} n_af(a),\quad \text{for all } f \in C_c(\mathbb{R}^n),
\end{equation}
is the unique {\it autocorrelation measure} of $\mu_{X}$. 

If the Fourier transform $\widehat{\gamma}_{X}$ exists as both a measure and a tempered distribution, i.e., as a tempered measure, it is called the {\it diffraction measure} of $X$. We then say that the diffraction measure of $X$ exists. In our setting, according to \cite[Lemma 2.3]{lagarias2000mathematical}, all of our autocorrelation measures will have Fourier transforms that exist as not only tempered distributions, but also as measures. 

One says that a structure $X$ has a {\it pure-point diffraction pattern} if its diffraction measure consists of a purely discrete measure. By leveraging the Poisson Summation Formula, one can show, for example, that a full-rank lattice $\Lambda \subset \mathbb{R}^n$ has a pure point diffraction pattern given by the diffraction measure
\[ \mu = \frac{1}{\det(\Lambda)}\sum_{y \in \Lambda^{\ast}} \delta_y, \]
where $\Lambda^\ast$ is the {\it dual lattice} of $\Lambda$ defined by
\[ \Lambda^\ast = \{x \in \mathbb{R}^n\colon x \cdot y \in \mathbb{Z},\text{ for all } y \in \Lambda\}. \] 
As a special case of this, one sees that the diffraction pattern of the set of integers in $\mathbb{R}$ is again the set of integers. Lagarias has shown in \cite{lagarias2000mathematical} that the diffraction pattern of any full-rank ideal crystal is discrete. This paper extends his result to non full-rank ideal crystals, which have diffraction patterns that are discrete in some dimensions, as we explain in Section \ref{results} below. 

\subsection{Quasiperiodic patterns from complex dimensions} \label{Quasiperiodic patterns from complex dimensions}
\hfill

\noindent Our goal is to address \cite[Problem 3.22]{lapidus2012fractal} (see Problem \ref{open problem} below) by uncovering diffraction properties for the sets of complex dimensions of nonlattice self-similar fractal strings. In the present section, we give some background on self-similar fractal strings, complex dimensions, and the lattice/nonlattice dichotomy in the set of all self-similar fractal strings. In the lattice case, the complex dimensions consist of $1$-dimensional ideal crystals in the plane (see Proposition \ref{idealLattice}); that is, the complex dimensions are periodically distributed (with the same period) along finitely many vertical lines. This is the main motivation behind extending \cite[Theorem 2.7]{lagarias2000mathematical} to the non full rank case. We now have a formula for diffraction by the complex dimensions of arbitrary lattice strings. By contrast, the complex dimensions in the nonlattice case are not periodic as they are in the lattice case. In fact, as is shown in \cite[Section 3.4]{lapidus2012fractal} (see also \cite{lapidus2003complex}), they are approximated by the complex dimensions of lattice strings, in such a way that they are quasiperiodic in a very precise sense. This was most recently studied in \cite{lapidus2021quasiperiodic} by the authors of the present paper, from a numerical perspective.

\begin{problem}[{\cite[Problem 3.22, p. ~89]{lapidus2012fractal}}] \label{open problem}
Is there a natural way in which the quasiperiodic pattern of the set of complex dimensions of a nonlattice self-similar fractal string can be understood in terms of a suitable (generalized) quasicrystal or of an associated quasiperiodic tiling? 
\end{problem}

From 1991 to 1993, Lapidus (in the more general and higher-dimensional case of fractal drums), as well as Lapidus and Pomerance established connections between complex dimensions and the theory of the Riemann zeta function by studying the connection between fractal strings and their spectra; see \cite{lapidus1991fractal}, \cite{lapidus1993vibrations} and \cite{lapidus1993the}. Then, in \cite{lapidus1995the}, Lapidus and Maier used the intuition coming from the notion of complex dimensions in order to rigorously reformulate the Riemann hypothesis as an inverse spectral problem for fractal strings. The notion of complex dimensions was precisely defined and the corresponding rigorous theory of complex dimensions was fully developed by Lapidus and van Frankenhuijsen, for example in \cite{lapidus2000fractal,lapidus2003complex,lapidus2006fractal,lapidus2012fractal}, in the one-dimensional case of fractal strings. Recently, the higher-dimensional theory of complex dimensions was fully developed by Lapidus, Radunovi\'c and {\v Z}ubrini\'c in the book \cite{lapidus2017fractal} and a series of accompanying papers; see also the first author's recent survey article \cite{lapidus2019an}.

\subsubsection{The geometric zeta function and complex dimensions of a fractal string} \label{geometric zeta function section}
An \textit{\textup(ordinary\textup) fractal string} 
$\mathcal{L}$ consists of a bounded open subset $\Omega \subset \mathbb{R}$ (see \cite{lapidus1993the}, \cite{lapidus1993vibrations}, \cite{lapidus2019an}, and, e.g., \cite[Chapter 1]{lapidus2012fractal}). Such a set $\Omega$ is a disjoint union of countably many disjoint open intervals. The lengths, 
\[ \ell_1, \ell_2, \ell_3, \dots, \] 
of the open intervals are called the \textit{lengths} of $\mathcal{L}$, and since $\Omega$ is a bounded set, it is assumed without loss of generality that
\[ \ell_1 \geq \ell_2 \geq \cdots > 0, \]
and that $\ell_j \to 0$ as $j \to \infty$.\footnote{We ignore here the trivial case when $\Omega$ is a finite union of disjoint open intervals.} Let $\sigma_{\mathcal{L}}$ denote the \textit{abscissa of convergence},\footnote{Note that $\lvert \ell_j^s \rvert = \ell_j^{\mathop{\textup{Re}}s}$, for every $s \in \mathbb{C}$ and all $j \in \mathbb{N}$.}  
\begin{equation} \label{abscissa} 
\sigma_{\mathcal{L}} = \inf\left\{\alpha \in \mathbb{R} : \sum_{j = 1}^\infty \ell_j^\alpha < \infty\right\},
\end{equation}
of the \textit{geometric zeta function} 
\[ \zeta_{\mathcal{L}}(s) = \sum_{j = 1}^\infty \ell_j^s \]
of $\mathcal{L}$. Here, the Dirichlet series $\sum_{j \geq 1} \ell_j^s$ converges for $\mathop{\textup{Re}}s > \sigma_{\mathcal{L}}$. Since there are infinitely many lengths, $\zeta_{\mathcal{L}}(s)$ diverges at $s = 0$. Also, since $\Omega$ has finite Lebesgue measure, $\zeta_{\mathcal{L}}(s)$ converges at $s = 1$. Hence, it follows from standard results about general Dirichlet series (see, e.g., \cite{serre1973a}) that the second equality in the above definition of $\sigma_{\mathcal{L}}$ holds, and, therefore, that 
$0 \leq \sigma_{\mathcal{L}} \leq 1$.

\begin{definition} \label{dimension of fractal string}
The {\it dimension} of a fractal string $\mathcal{L}$ associated with a bounded open set 
$\Omega \subset \mathbb{R}$, denoted by $D_\mathcal{L}$, is defined as the \textit{\textup(inner\textup) Minkowski dimension} of $\Omega$:\footnote{Strictly speaking, $D_\mathcal{L}$ is the {\it upper} (inner) Minkowski dimension of $\mathcal{L}$; for the simplicity of exposition, however, we will ignore this nuance here. In the case of a self-similar string, the {\it Minkowski dimension $D_{\mathcal{L}}$ of $\mathcal{L}$ exists}, in the sense that its upper and lower (inner) Minkowski dimensions coincide.}
\[ D_{\mathcal{L}} = \inf\{\alpha \geq 0 \colon V(\varepsilon) = O(\varepsilon^{1 - \alpha})\text{, as }\varepsilon \to 0^+\}, \]
where $V(\varepsilon)$ denotes the volume (i.e., total length, here) of the \textit{inner tubular neighborhood} of $\partial\Omega$ with radius $\varepsilon$ given by
\[ V(\varepsilon) = \operatorname{vol}_1\left(\{x \in \Omega \colon d(x,\partial\Omega) < \varepsilon\}\right). \]   
\end{definition}

According to \cite[Theorem 1.10]{lapidus2012fractal} (see also \cite{lapidus1993vibrations}), the abscissa of convergence 
$\sigma_{\mathcal{L}}$ 
of a fractal string 
$\mathcal{L}$ 
coincides with the dimension 
$D_{\mathcal{L}}$ of $\mathcal{L}$: $\sigma_{\mathcal{L}} = D_{\mathcal{L}}$.

The geometric importance of the set of complex dimensions of a fractal string $\mathcal{L}$ with boundary $\Omega$ (see Definition \ref{complex dimensions} below), which always includes its inner Minkowski dimension $D_{\mathcal{L}}$ (see, e.g., \cite[Chapter 1]{lapidus2012fractal}),\footnote{More specifically, this is so provided $\zeta_{\mathcal{L}}$ admits a meromorphic continuation to a neighborhood of $D_{\mathcal{L}}$. Again, this is the case for all self-similar strings, for example.} can be justified as follows: the complex dimensions appear in an essential way in the explicit formula for the volume $V(\varepsilon)$ of the inner tubular neighborhood of the boundary $\partial\Omega$; see the corresponding ``fractal tube formulas'' obtained in Chapter 8 of \cite{lapidus2012fractal}. 

\begin{definition} \label{complex dimensions}
Suppose $\zeta_{\mathcal{L}}$ has a meromorphic continuation to the entire complex plane. Then the poles of $\zeta_{\mathcal{L}}$ are called the \textit{complex dimensions} of $\mathcal{L}$. 
\end{definition}

Accordingly, the complex dimensions give very detailed information about the intrinsic oscillations that are inherent to fractal geometries; see also Remark \ref{definition of fractal} below. The current paper, however, deals with the complex dimensions viewed only as a discrete subset of the complex plane, and the focus is on diffraction by the sets of complex dimensions of self-similar fractal strings, which are the special type of fractal strings that are constructed through an iterative process involving scaling, as we next explain just after Remark \ref{definition of fractal} below.

\begin{remark} \label{definition of fractal}
A geometric object is said to be {\it fractal} if it has at least one {\it nonreal} complex dimension.\footnote{It then has at least two nonreal complex dimensions since, clearly, nonreal complex dimensions come in complex conjugate pairs.} This definition applies to fractal strings (including all self-similar strings, which are shown to be fractal in this sense \cite{lapidus2000fractal,lapidus2003complex,lapidus2006fractal,lapidus2012fractal}),\footnote{In fact, self-similar strings have infinitely many nonreal complex dimensions; see, e.g., Equation (2.37) in \cite[Theorem 2.16]{lapidus2012fractal}.} and to bounded subsets of 
$\mathbb{R}^n$ (for any integer $n \geq 1$), as well as, more generally, to relative fractal drums, which are natural higher-dimensional counterparts of fractal strings (see \cite{lapidus2017fractal,lapidus2019an}).   
\end{remark}

\subsubsection{Self-similar fractal strings} \label{complex}
\noindent Let $I$ be a nonempty, compact interval with length $L$. Let
\begin{equation} \label{contractions}
\Phi_1, \Phi_2, \dots, \Phi_M:I \to I
\end{equation}
be $M \geq 2$ contraction similitudes with distinct scaling ratios
\[ 1 > r_1 > r_2 > \cdots > r_N > 0\qquad (M \geq N).\footnote{Accordingly, the corresponding, and not necessarily distinct, scaling ratios would be repeated according to their respective multiplicities.} \]
What this means is that for all $j = 1, \dots, M$,
\[ \left\rvert \Phi_j(x) - \Phi_j(y) \right\lvert = r_j\lvert x - y \rvert, \text{ for all } x, y \in I. \]
Assume that after having applied to $I$ each of the maps $\Phi_j$ in (\ref{contractions}), for $j = 1, \dots, M$, the resulting images,
\begin{equation} \label{images}
\Phi_1(I), \dots, \Phi_M(I),
\end{equation}
do not overlap, except possibly at the endpoints, and that $\sum_{j = 1}^M r_j < 1$.

These assumptions imply that the complement of the union, $\bigcup_{j = 1}^M \Phi_j(I)$, in $I$ consists of $K \geq1$ pairwise disjoint open intervals with lengths 
\[ 1 > g_1L \geq g_2L \geq \cdots \geq g_KL > 0, \]
called the {\it first intervals}. Note that the quantities $g_1, \dots, g_K$, called the \textit{gaps}, along with the scaling ratios $r_1, \dots, r_M$, satisfy the equation 
\[ \sum_{j = 1}^M r_j + \sum_{k = 1}^K g_k = 1. \]

The process which was just described above is then repeated for each of the $M$ images $\Phi_j(I)$, in (\ref{images}), in order to produce $KM$ additional pairwise disjoint open intervals in $I$. Repeating this process ad infinitum yields countably many open intervals, which defines a fractal string $\mathcal{L}$ with bounded open set $\Omega$ given by the (necessarily disjoint) union of these open intervals. Any fractal string obtained in this manner is called a \textit{self-similar fractal string} (or a {\it self-similar string}, in short).
 
The first two authors of the present paper have shown that the geometric zeta function of any self-similar fractal string has a meromorphic continuation to all of $\mathbb{C}$; see Theorem 2.3 in Chapter 2 of \cite{lapidus2012fractal}. Specifically, the geometric zeta function $\zeta_\mathcal{L} = \zeta_\mathcal{L}(s)$ of any self-similar fractal string with scaling ratios $\{r_j\}_{j = 1}^M$, gaps $\{g_k\}_{k = 1}^K$, and total length $L$ is given by

\begin{equation} \label{extension} 
\zeta_\mathcal{L}(s) = \frac{L^s\sum_{k = 1}^{K} g_k^s}{1 - \sum_{j = 1}^{M} r_j^s}, \text{ for all } s \in \mathbb{C}.
\end{equation}

Note that both the numerator and the denominator of the right-hand side of (\ref{extension}) are special kinds of exponential polynomials, known as Dirichlet polynomials. 

\begin{definition} \label{Dirichlet polynomial}
Given an integer $N \geq 1$, let $r_0 > r_1 > \cdots > r_N > 0$, and let $m_0, m_1, \dots, m_N \in \mathbb{C}$. The function $f: \mathbb{C} \to \mathbb{C}$ given by 
\begin{equation} \label{poly} 
f(s) = \sum_{j = 0}^N m_jr_j^s
\end{equation}
is called a \textit{Dirichlet polynomial} with {\it scaling ratios} 
$r_1, \dots, r_N$
and respective {\it multiplicities} 
$m_0, \dots, m_N$.\footnote{In the geometric situation of a self-similar string $\mathcal{L}$ discussed just above, $r_0 := 1$ and $m_0 := -1$, while the $r_j$'s, with $j = 1, \dots, N$, correspond to the {\it distinct} scaling ratios, among the scaling ratios $\{r_j\}_{j = 1}^M$ of $\mathcal{L}$. Hence, in particular, $1 \leq N \leq M$ in this case, and, modulo a suitable abuse of notation, for each distinct scaling ratio $r_j$, for $j = 1, \dots, M$, $m_j := \#\{1 \leq k \leq M\colon r_k = r_j\}$ is indeed the multiplicity of $r_j$.} 
\end{definition}

Therefore, the set of complex dimensions of any self-similar fractal string is a subset of the set of complex roots of an associated Dirichlet polynomial $f = f(s)$ with positive integer multiplicities. While, in general, some of the zeros of the denominator $f = f(s)$ of the right-hand side of (\ref{extension}) could be cancelled by the roots of its numerator (see \cite[Section 2.3.3]{lapidus2012fractal}), in the important special case of a single gap length 
(i.e., when $g_1 = \cdots = g_K$), the complex dimensions precisely coincide with the complex roots of $f$. 

Let $\mathcal{L}$ be a self-similar fractal string with the associated Dirichlet polynomial  
\[ f(s) = \sum_{j = 0}^N m_jr_j^s,\text{ for all }s \in \mathbb{C}. \]
Assume without loss of generality that $r_0 = 1$. Define the {\it weights} $w_1, \dots, w_N$ of $f$ by $w_j: = -\log r_j$, for $1 \leq j \leq N$. From the perspective of the current paper, there is an important dichotomy in the set of all self-similar fractal strings, according to which any self-similar fractal string is either {\it lattice} or {\it nonlattice}, depending on the scaling ratios with which a self-similar fractal string is constructed. Specifically, the {\it lattice case\/} is when the $N$ distinct weights $w_j$ are rationally dependent, and the {\it nonlattice case\/} is when at least one ratio $w_i/w_j$ is irrational. In the lattice case, the multiplicative group $G$ generated by the scaling ratios $r_j$ is of rank 1 (that is, $G = \langle r_1, \dots, r_N \rangle = r^{\mathbb{Z}}$, for some $r \in (0,1)$, called the multiplicative generator), and this group is of rank at least $2$ in the nonlattice case. The {\it generic nonlattice case\/} is when $N \geq 2$ and the rank of $G$ is equal to $N$.

\begin{definition} \label{lattice/nonlattice}
A Dirichlet polynomial $f$ is called {\it lattice} if $w_j / w_1$ is rational for $1 \leq j \leq N$, and it is called {\it nonlattice} otherwise.\footnote{Note that if $N = 1$, then $f$ must be lattice because $w_1/w_1 = 1$ is rational.}
\end{definition}

In the lattice case, there exist positive integers
\[ k_1 < k_2 < \cdots < k_N < \infty \]
such that 
\[ \frac{w_j}{w_1} = \frac{k_j}{k_1}, \quad \text{for } j = 2, \dots, N. \]

According to \cite[Theorem 3.6]{lapidus2012fractal}, the complex roots of a lattice Dirichlet polynomial 
$f$ lie periodically on finitely many vertical lines, and on each line they are separated by the positive number 
\[ \textbf{p} = \frac{2\pi}{\log r^{-1}}, \] 
called the \textit{oscillatory period of $f$}. 

More precisely, following the discussion surrounding \cite[Equation (2.48), p. 58]{lapidus2012fractal}, the roots are computed by first rewriting $f(s)$ as a polynomial $g(z)$ of degree $k_N$ in the complex variable $z := r^s$, where $r := r_1^{1/k_1}$ is the multiplicative generator of $f = f(s)$: 
\begin{equation} \label{complex poly}
g(z) = 1 - m_1z^{k_1} - m_2z^{k_2} - m_3z^{k_3} - \cdots - m_Nz^{k_{N}}.
\end{equation}
There are $k_N$ roots of $g(z)$, counted with multiplicity. Each root is of the form 
\[ z = \lvert z \rvert e^{i\theta}, \]
where $-\pi < \theta \leq \pi$, and it corresponds to a unique root of $f(s)$, namely,
\[ s = \omega = \frac{-\log \lvert z \rvert}{\log r^{-1}} - \frac{i \theta}{\log r^{-1}}. \]
Therefore, given a lattice Dirichlet polynomial $f$ with oscillatory period $\textbf{p}$, there exist finitely many {\it principal complex roots} 
\[ W(k_N) := \{\omega_1, \dots, \omega_{k_N}\}, \] 
such that the set $\mathcal{D}_f$ of complex roots of $f = f(s)$ is given by
\[ \mathcal{D}_f = \bigcup_{1 \leq j \leq k_N} H_j, \]
where for $1 \leq j \leq k_N$,
\[ H_j := \{\omega_j + in\textbf{p} : n \in \mathbb{Z}\}. \]

The set of complex roots of any Dirichlet polynomial is a subset of the horizontally bounded vertical strip 
\[ R := \{z \in \mathbb{C} : D_\ell \leq \mathop{\textup{Re}} z \leq D\}, \] 
where 
$D_\ell$ 
and 
$D$ 
are the unique real numbers satisfying the equations\footnote{If $N = 1$, then the first sum on the left-hand side of (\ref{sums}) is equal to zero, by convention.}
\begin{equation} \label{sums}
1 + \sum_{j = 1}^{N - 1} |m_j|r_j^{D_\ell} = |m_{N}|r_N^{D_\ell}\quad \text{and}\quad \sum_{j = 1}^{N} |m_j|r_j^D = 1,
\end{equation}
respectively. These numbers satisfy the inequality $-\infty < D_{\ell} \leq D$. 
Since the multiplicities are assumed to be positive integers, the complex roots are symmetric about the real axis, the number $D$ defined above is positive, and it is the only real root of $f = f(s)$; furthermore, it is a simple root.

\begin{remark}
In the case of a self-similar string $\mathcal{L}$, the nonnegative number $\omega_{k_N} = D$, $D$ does not exceed 1, and $D$ coincides with $D_{\mathcal{L}}$, the inner Minkowski dimension of $\mathcal{L}$: $D = D_{\mathcal{L}} = \sigma_{\mathcal{L}}$, in the notation introduced in (\ref{abscissa}) and Definition \ref{dimension of fractal string}.
\end{remark}

The complex dimensions of a lattice polynomial\footnote{See Definition \ref{complex dimensions} above for the definition of the complex dimensions as the poles of the `geometric zeta function' associated with a fractal string.} can be numerically obtained via the roots of certain polynomials that are typically sparse with large degrees, and lie periodically on finitely many vertical lines counted according to multiplicity. Furthermore, on each vertical line, they are separated by a positive real number ${\bf p}$, called the {\it oscillatory period} of the string. (See \cite[Chapter 2]{lapidus2000fractal}, \cite[Theorem 2.5]{lapidus2003complex}, and \cite[Theorems 2.16 and 3.6]{lapidus2012fractal}.)

\begin{example} \label{complexEx}
Consider the self-similar fractal string $\mathcal{L}$ constructed from the contraction similitudes
\[ \Phi_1(x) = \frac{1}{2}x,\quad \Phi_2(x) = \frac{1}{8}x + \frac{7}{3}, \]
and the initial interval $I = [0,8/3]$ with length $L = 8/3$. We have two distinct scaling ratios given by $r_1 = 1/2$ and $r_2 = (1/2)^3$, and one gap given by $g = 1$. In order to obtain the complex dimensions, we compute the poles of the geometric zeta function given in (\ref{extension}),
\begin{equation}
\zeta_\mathcal{L}(s) = \frac{\left(8/3\right)^s}{1 - 2^{-s} - 8^{-s}}.
\end{equation}
Thus, the set of complex dimensions of this lattice string consists of the set of roots of the Dirichlet polynomial 
\[ f(s) = 1 - 2^{-s} - 8^{-s}. \]
which are computed by solving the polynomial equation
\[ 1 - z - z^3 = 0, \]
where $z := 2^{-s}$. A plot of the roots, which is also the set of complex dimensions of the lattice string, is shown in Figure \ref{rootPlot}.

\begin{figure}[tb] 
\centering
\includegraphics[width=0.75\textwidth]{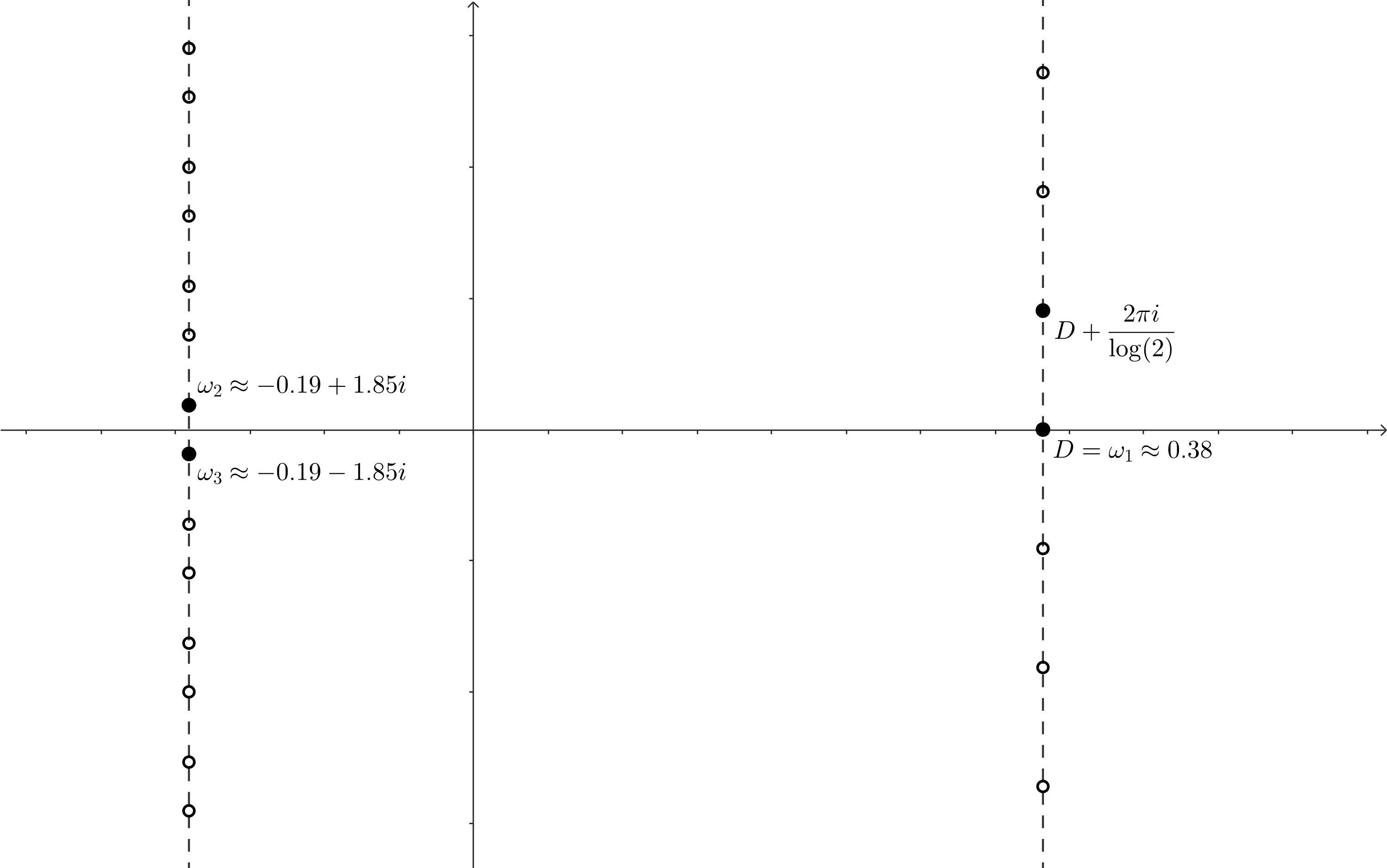}
\caption{The complex dimensions of the lattice string from Example \ref{complexEx}.} 
\label{rootPlot}
\end{figure}
\end{example} 

For nonlattice self-similar fractal strings, the complex dimensions cannot be numerically obtained in the same way as in the lattice case. Indeed, they correspond to the roots of a transcendental (rather than polynomial) equation. They can, however, be approximated by the complex dimensions of a sequence of lattice strings with larger and larger oscillatory periods.  The {\it Lattice String Approximation algorithm} of Lapidus and van Frankenhuijsen, referred to in this paper as the LSA {\it algorithm} (and theorem), allows one to replace the study of nonlattice self-similar fractal strings by the study of suitable approximating sequences of lattice self-similar fractal strings. Using this algorithm, Lapidus and van Frankenhuijsen have shown that the sets of complex dimensions of nonlattice self-similar fractal strings are quasiperiodically distributed, in a precise sense (see, e.g., \cite[Theorem 3.6, Remark 3.7]{lapidus2003complex} and \cite[Subsection 3.4.2]{lapidus2006fractal,lapidus2012fractal}), and they have illustrated their results by means of a number of examples (see, e.g., the examples from Section 7 in \cite{lapidus2003complex} and their counterparts in Chapters 2 and 3 of \cite{lapidus2012fractal}). 

Following the suggestion by those same authors in the introduction of \cite{lapidus2003complex}, and in \cite[Remark 3.38]{lapidus2006fractal,lapidus2012fractal}, the authors of the present paper presented in \cite{lapidus2021quasiperiodic} an efficient implementation of the LSA algorithm incorporating the application of a powerful lattice basis reduction algorithm, which is due to A. K. Lenstra, H. W. Lenstra and L. Lov\'asz and is known as the LLL {\it algorithm}, in order to generate simultaneous Diophantine approximations; see \cite[Proposition 1.39]{lenstra1982factoring} and \cite[Proposition 9.4]{bremner2011lattice}. It also uses the open source software MPSolve, due to D. A. Bini, G. Fiorentino and L. Robol in \cite{bini2000design}, \cite{bini2014solving}, in order to approximate the roots of large degree sparse polynomials. Indeed, the LLL algorithm along with MPSolve allow for a deeper numerical and visual exploration of the quasiperiodic patterns of the complex dimensions of self-similar strings via the LSA algorithm than what had already been done in \cite{lapidus2000fractal, lapidus2003complex, lapidus2006fractal, lapidus2012fractal}.

In the latter part of \cite[Chapter 3]{lapidus2012fractal}, a number of mathematical results were obtained concerning either the nonlattice case with two distinct scaling ratios (amenable to the use of continued fractions) as well as the nonlattice case with three or more distinct scaling ratios (therefore, typically requiring more complicated simultaneous Diophantine approximation algorithms). In \cite{lapidus2021quasiperiodic}, it has become possible, in particular, to explore more deeply and accurately additional nonlattice strings with rank greater than or equal to three, i.e., those that cannot be solved using continued fractions. We will address the nonlattice case in the sequel to this paper, \cite{lapidus2024diffraction}.

\section{The Diffraction Measure of a Lattice String} \label{results}

Lagarias computed the diffraction measure for a full-rank ideal crystal, showing that an ideal crystal has a pure-point discrete diffraction measure \cite[Theorem 2.7]{lagarias2000mathematical}. Proposition \ref{generalPSF} below gives our derivation of a suitable version of the Poisson Summation Formula (PSF) that is amenable to an $m$-dimensional lattice $\Lambda \subset \mathbb{R}^{d + m}$, where $d$ is a nonnegative integer. We note that while the version of the PSF we use in the present paper is already known (see, e.g., \cite[Proposition 6.2, Corollary 7.1]{argabright1974fourier} and \cite[Theorem 5.5.2]{reiter2000classical}), we have presented a proof that closely resembles that of the classical PSF (see, e.g., \cite[Chapter 9]{baake2013aperiodic}) for the benefit of the reader. 
Using Proposition \ref{generalPSF}, and by adjusting the averaging shapes in the computation of the autocorrelation measure, we obtain a generalization of \cite[Theorem 2.7]{lagarias2000mathematical} to the case of any (generalized) $m$-dimensional ideal crystal in $\mathbb{R}^n$.

Let $\mathcal{S}(\mathbb{R}^n)$ denote the space of Schwartz functions on $\mathbb{R}^n$, consisting of all infinitely differentiable functions on $\mathbb{R}^n$ having rapid decay at infinity; see, e.g., \cite{schwartz1998theorie}. Note that for a (suitable) function 
\[ F\colon \mathbb{R}^d \oplus \mathbb{T}^m \to \mathbb{C}, \]
we have the Fourier inversion formula
\[ F(x,\alpha) = \sum_{b \in \mathbb{Z}^m} \int_{\mathbb{R}^d} \widehat{F}(y,b)e(x \cdot y + \alpha \cdot b)\, dy, \]
where for $y \in \mathbb{R}^d$ and $b \in \mathbb{Z}^m$,
\[ \widehat{F}(y,b) = \int_{\mathbb{T}^m} \int_{\mathbb{R}^d} F(x,\alpha)e(-y \cdot x - b \cdot \alpha)\, dx\, d\alpha, \]
and $e(t)$ denotes $e^{2\pi it}$ (see, e.g., \cite[Section VII,4]{katznelson2004an}, \cite{lapidus2008in}, or \cite{serre1973a}). Here, and thereafter, $\mathbb{T}^m := \mathbb{R}^m/\mathbb{Z}^m$ stands for the $m$-dimensional torus --- viewed, for us, as a compact abelian group. 

\begin{proposition} \label{generalPSF}
For any $\phi \in \mathcal{S}(\mathbb{R}^{d} \oplus \mathbb{R}^m)$ and point $(x,\alpha) \in \mathbb{R}^d \oplus \mathbb{T}^m$, we have
\begin{equation} \sum_{a \in \mathbb{Z}^m} \phi(x,a + \alpha) = \sum_{b \in \mathbb{Z}^m} \int_{\mathbb{R}^d} \widehat{\phi}(y,b)e(x \cdot y + \alpha \cdot b)\, dy.
\end{equation}
\end{proposition}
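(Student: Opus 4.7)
The plan is to recognize this as the Fourier inversion identity on $\mathbb{T}^m \oplus \mathbb{R}^d$ applied to the periodization of $\phi$ in the first $m$ variables. Define
\[ F(\alpha,x) := \sum_{a \in \mathbb{Z}^m} \phi(a + \alpha, x), \quad (\alpha,x) \in \mathbb{T}^m \oplus \mathbb{R}^d. \]
Because $\phi \in \mathcal{S}(\mathbb{R}^m \oplus \mathbb{R}^d)$ has rapid decay jointly in all variables, this series converges absolutely and uniformly on compact subsets of $\mathbb{R}^m \oplus \mathbb{R}^d$; moreover, the same rapid decay of all derivatives of $\phi$ implies that $F$ is smooth, is $\mathbb{Z}^m$-periodic in $\alpha$ (so descends to $\mathbb{T}^m \oplus \mathbb{R}^d$), and is Schwartz-class in the $x$ variable uniformly in $\alpha$. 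This regularity is enough to apply the Fourier inversion formula on $\mathbb{T}^m \oplus \mathbb{R}^d$ quoted just before the proposition.

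The core computation is to identify the Fourier coefficients of $F$ with the values of $\widehat{\phi}$ on the ``dual lattice'' $\mathbb{Z}^m \oplus \mathbb{R}^d$. For $b \in \mathbb{Z}^m$ and $y \in \mathbb{R}^d$, one writes
\[ \widehat{F}(b,y) = \int_{\mathbb{R}^d}\int_{\mathbb{T}^m} \sum_{a \in \mathbb{Z}^m} \phi(a+\alpha,x)\, e(-b\cdot\alpha - y\cdot x)\, d\alpha\, dx, \]
interchanges the sum and the integrals (justified by absolute integrability, again a consequence of Schwartz decay), and uses the fact that $e(-b\cdot\alpha) = e(-b\cdot(a+\alpha))$ for $b,a \in \mathbb{Z}^m$. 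Making the change of variables $\alpha' = a + \alpha$ turns the sum of integrals over a fundamental domain $[0,1)^m$ of $\mathbb{Z}^m$ into a single integral over all of $\mathbb{R}^m$, yielding exactly
\[ \widehat{F}(b,y) = \int_{\mathbb{R}^d}\int_{\mathbb{R}^m} \phi(\alpha',x)\, e(-b\cdot\alpha' - y\cdot x)\, d\alpha'\, dx = \widehat{\phi}(b,y). \]

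Feeding this identification into the Fourier inversion formula for $F$ on $\mathbb{T}^m \oplus \mathbb{R}^d$ (stated just above the proposition in the paper) yields
\[ \sum_{a \in \mathbb{Z}^m} \phi(a+\alpha,x) = F(\alpha,x) = \int_{\mathbb{R}^d} \sum_{b \in \mathbb{Z}^m} \widehat{\phi}(b,y)\, e(\alpha\cdot b + x\cdot y)\, dy, \]
which is the claimed identity.

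The main obstacle is purely technical: namely, justifying rigorously the interchanges of the sum over $a \in \mathbb{Z}^m$ with the integrals over $\mathbb{T}^m$ and $\mathbb{R}^d$, and then justifying the applicability of the Fourier inversion formula on the non-compact abelian group $\mathbb{T}^m \oplus \mathbb{R}^d$ (in particular, the absolute summability in $b$ and integrability in $y$ of the right-hand side). Both are standard consequences of the Schwartz hypothesis on $\phi$, since $\widehat{\phi}$ is itself Schwartz on $\mathbb{R}^m \oplus \mathbb{R}^d$ and its restriction to $\mathbb{Z}^m \oplus \mathbb{R}^d$ inherits rapid decay jointly in $b$ and $y$. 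With these justifications in place, the rest of the argument is just the two-step calculation above: periodize, then invert.
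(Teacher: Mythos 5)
Your proof is correct and follows essentially the same route as the paper's: periodize $\phi$ over $\mathbb{Z}^m$ to define $F$, show $\widehat{F}(b,y)=\widehat{\phi}(b,y)$ by unfolding the sum over a fundamental domain, and conclude by Fourier inversion on $\mathbb{T}^m\oplus\mathbb{R}^d$. You in fact supply more explicit justification for the convergence and interchange of sum and integral than the paper does.
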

\begin{proof}
Let $\phi \in \mathcal{S}(\mathbb{R}^d \oplus \mathbb{R}^m)$ and define the function $F\colon \mathbb{R}^d \oplus \mathbb{T}^m \to \mathbb{C}$ by
\[ F(x,\alpha) = \sum_{a \in \mathbb{Z}^m} \phi(x,a + \alpha). \]
We first compute $\widehat{F}(y,b)$ for $b \in \mathbb{Z}^m$, and show that it is equal to $\widehat{\phi}(y,b)$:
\begin{align*} 
\widehat{F}(y,b) &= \int_{\mathbb{T}^m} \int_{\mathbb{R}^d} F(\alpha,x)e(-y \cdot x - b \cdot \alpha)\, dx\, d\alpha \\
&= \int_{\mathbb{T}^m} \int_{\mathbb{R}^d} \sum_{a \in \mathbb{Z}^m} \phi(x,a + \alpha)e(-y \cdot x - b \cdot \alpha)e(-b \cdot a)\, dx\, d\alpha,
\end{align*}
as $e(-b \cdot a) = 1$. Next, the sum over $a$ and the integral over the torus gives an integral over $\mathbb{R}^m$,
\[ \widehat{F}(y,b) = \int_{\mathbb{R}^d} \sum_{a \in \mathbb{Z}^m} \int_{\mathbb{T}^m} \phi(x,a + \alpha)e(-y \cdot x - b \cdot (a + \alpha))\, d\alpha\, dx = \widehat{\phi}(y,b). \]
Now, by Fourier inversion, we obtain the identity
\[ F(x,\alpha) = \sum_{b \in \mathbb{Z}^m} \int_{\mathbb{R}^d} \widehat{\phi}(y,b) e(x \cdot y + \alpha \cdot b)\, dy. \]
This completes the proof of the proposition.
\end{proof}

\begin{theorem}[Extended Poisson Summation Formula: General Version] \label{LatticegeneralPSF}
For a basis $B$ of $\mathbb{R}^m$, let $\Lambda = B\mathbb{Z}^m$ be a full-rank lattice in $\mathbb{R}^m$. Let $d \geq 0$ be an integer. Then, for any $\phi \in \mathcal{S}(\mathbb{R}^{d} \oplus \mathbb{R}^m)$, and every $\alpha \in \mathbb{R}^m / \Lambda$ and $x \in \mathbb{R}^d$,
\begin{equation} \label{EPF}
\sum_{l \in \Lambda} \phi(x,l + \alpha) = \frac{1}{\left\lvert\det(B)\right\rvert}\sum_{b \in \left(B^{-1}\right)^T\mathbb{Z}^m}\int_{\mathbb{R}^d} \widehat{\phi}(y,b)e(x \cdot y + \alpha \cdot b)\, dy,
\end{equation} 
where $\left(B^{-1}\right)^T\mathbb{Z}^m$ is the lattice dual to $\Lambda$, generated by the transpose of $B^{-1}$.
\end{theorem}
\begin{proof}
Let $\phi \in \mathcal{S}(\mathbb{R}^{d} \oplus \mathbb{R}^m)$, and write
\[ \sum_{l \in \Lambda} \phi(x,l + \alpha) = \sum_{a \in \mathbb{Z}^m} \left(\phi \circ \left(\text{I}_{\mathbb{R}^d} \oplus B\right)\right)(x,a + B^{-1}\alpha), \]
where $\text{I}_{\mathbb{R}^d}$ denotes the identity map of $\mathbb{R}^d$. The Fourier transform of the function in the sum above is computed by the substitution $u := Bt$,
\begin{align*}
\left(\phi \circ \left(\text{I}_{\mathbb{R}^d} \oplus B\right)\right)^{\widehat{}}(y,b) &= \int_{\mathbb{R}^m} \int_{\mathbb{R}^d} \phi(x,Bt)e(- y \cdot x - b \cdot t)\, dx\, dt \\
&= \frac{1}{\lvert \det(B) \rvert}\int_{\mathbb{R}^m} \int_{\mathbb{R}^d} \phi(x,u)e(- y \cdot x - b \cdot B^{-1}u)\, dx\, du.
\end{align*}
Next, we note that $b \cdot B^{-1}u = b^TB^{-1}u$ is a matrix product; hence $b \cdot B^{-1}u = (B^{-1})^Tb \cdot u $. We then obtain  
\begin{align*}
\left(\phi \circ \left(\text{I}_{\mathbb{R}^d} \oplus B\right)\right)^{\widehat{}}(y,b) &= \frac{1}{\lvert \det(B) \rvert}\int_{\mathbb{R}^m} \int_{\mathbb{R}^d} \phi(x,u)e(- y \cdot x - (B^{-1})^Tb \cdot u)\, dx\, du \\
&= \frac{1}{\lvert \det(B) \rvert}\widehat{\phi}(y,(B^{-1})^Tb).
\end{align*}
Proposition \ref{generalPSF} then gives the desired formula.
\end{proof}

\begin{remark} \label{remark full}
Note that the classical version of the Poisson Summation Formula corresponds to a {\it nondegenerate lattice}, i.e., when $d = 0$ and in other words to the case when the lattice $\Lambda$ is of full rank in all of $\mathbb{R}^d \oplus \mathbb{R}^m$.
\end{remark}

Inspired by \cite[Remark 9.2]{baake2013aperiodic}, we give a suitable version of the classical definition of the autocorrelation measure reviewed in Subsection \ref{auto section}, so that it can be applied to an ideal crystal generated from a rank-$m$ lattice in $(d + m)$-dimensional Euclidean space, in the case that $d > 0$. More specifically, we replace the centered cubes in (\ref{count}) with rectangles of fixed width that grow only vertically to cover the lattice. This is where our definition of autocorrelation deviates from Hof's in \cite{hof1992quasicrystals}-\cite{hof1995diffraction}.   

As an example, consider the planar ideal crystal from Figure \ref{ideal crystal} given by
\[ X = \{0\} \times \mathbb{Z} + F, \]
where $F = \{(3/2,-5/4),(0,-1/2),(1,0)\}$.
We do not consider Hof's definition of autocorrelation, which averages out by the centered square $C_L$ with side $L$, as this would give us 
\[ n_a = \lim_{L \to \infty} \frac{N_L(a)}{L^2} = 0 \] 
for all $a \in X - X$, yielding an autocorrelation measure that is equal to 0. Instead, we use the centered rectangle $R_{C,L}$ given by $R_{C,L} = [-C,C] \times [-L,L]$, where $C := \max_{f \in F}\|\pi(f)\|_\infty$,\footnote{In case this maximum vanishes, we use $C=1$.} and $\pi: \mathbb{R}^2 \to \mathbb{R}$ is the projection onto the horizontal axis. This is illustrated in Figure \ref{autoEX}. 

\begin{figure}[tb]
\centering
\includegraphics[width=0.5\textwidth]{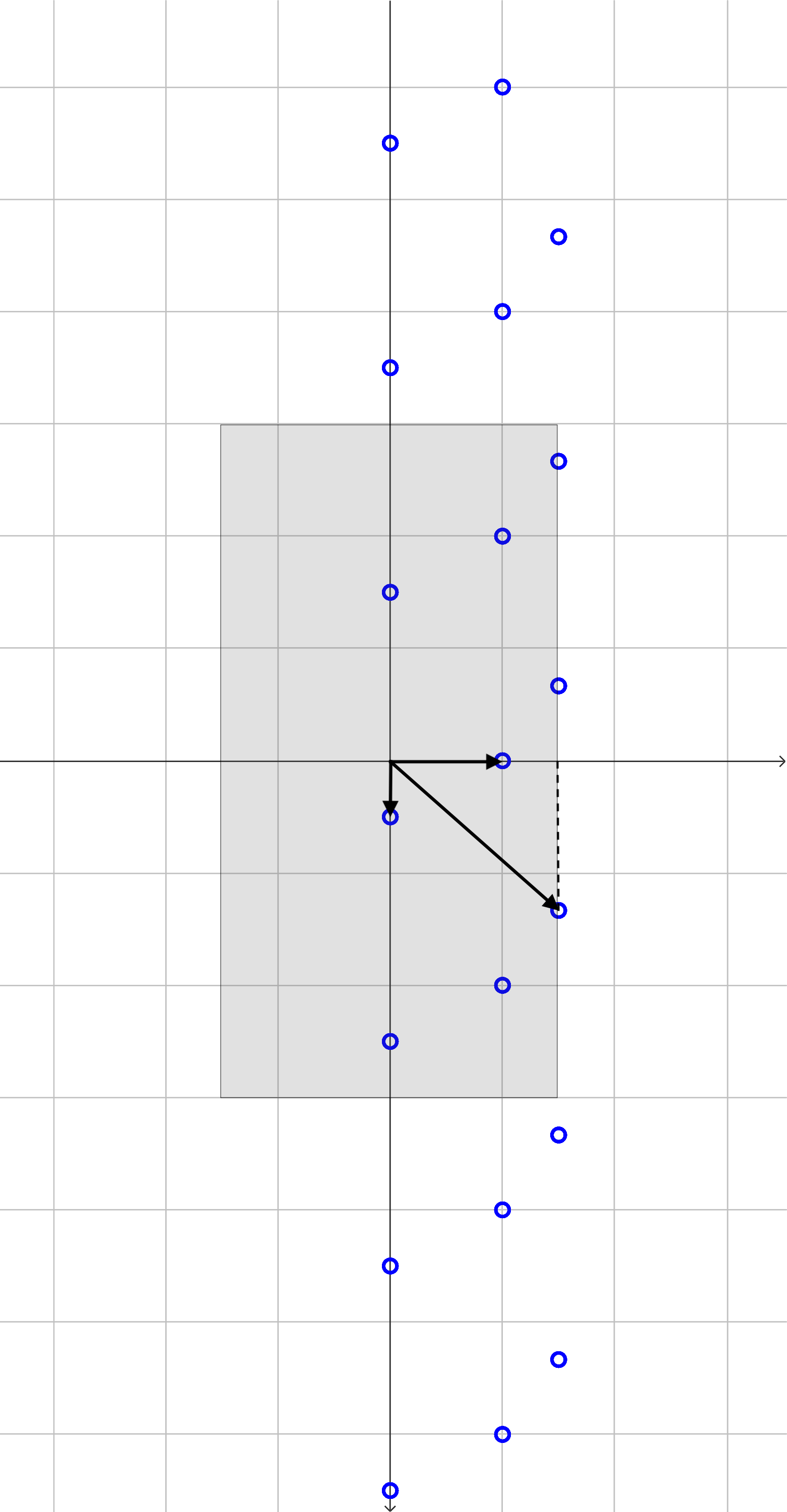}
\caption{The rectangle $R_{3,3/2} = [-3/2,3/2] \times [-3,3]$.}
\label{autoEX}
\end{figure}

More generally, if $\Lambda = B\mathbb{Z}^m$ is a lattice in $\mathbb{R}^m$ with $F \subset \mathbb{R}^{d + m}$ consisting of a finite set of translation vectors, we average out by the centered $(d + m)$-dimensional rectangle $R_{C,L}$ with volume equal to $(2C)^d(2L)^m$, where the first $m$ sides of $R_{C,L}$ are equal to $2L$, and the lengths of the remaining $d$ sides are fixed to being equal to $2C$, where 
\begin{equation} \label{C} 
C := \max_{f \in F}\|\pi_d(f)\|_\infty.
\end{equation}

\begin{definition}
Let $X$ be a countably infinite set in $\mathbb{R}^{d + m}$. If the limit
\begin{equation} \label{nCa} 
n_{C,a} := \lim_{L \to \infty} \frac{N_{C,L}(a)}{(2C)^d(2L)^m}
\end{equation}
exists and is positive as well as finite, for all $a \in X - X$, where 
\[ N_{C,L}(a) := \#\{(x_1,x_2) \in X \times X\colon x_1, x_2 \in R_{C,L} \text{ and } a = x_2 - x_1\}, \]
then the complex measure $\gamma$ given by
\[ \gamma(\phi) := \sum_{a \in X - X} n_{C,a}\phi(a),\quad \text{for all } \phi \in \mathcal{S}(\mathbb{R}^{d + m}), \]
is the unique {\it generalized autocorrelation measure} of $X$.
\end{definition}

\begin{theorem}[Autocorrelation and Diffraction Measures of a Possibly Degenerate Ideal Crystal] \label{idealDiff}
Let $\Lambda = B\mathbb{Z}^m$ be a lattice in $\mathbb{R}^m$. Let $d \geq 0$ be an integer, and let $F \subset \mathbb{R}^{d + m}$ be a finite set of translation vectors. Then, the generalized autocorrelation measure of the $(d + m)$-dimensional ideal crystal 
\[ X = T(\Lambda) + F \]
exists, is unique, and is given, for any $\phi \in \mathcal{S}(\mathbb{R}^{d + m})$, by
\begin{equation} \label{autoc}
\gamma(\phi) = \frac{1}{(2C)^d\lvert \det(B) \rvert}\sum_{a \in \Lambda} \left(\sum_{f,g \in F} \phi(T(a) + (f - g)) \right),
\end{equation}
where $C$ is given by (\ref{C}) and $\pi_d: \mathbb{R}^{d + m} \to \mathbb{R}^d$ is the projection onto the first coordinate.  

Furthermore, the Fourier transform of $\gamma$ is given, for any $\phi \in \mathcal{S}(\mathbb{R}^d \oplus \mathbb{R}^m)$, by
\begin{multline} \label{diff}
\widehat\gamma(\phi) = \frac{1}{(2C)^d\lvert \det(B) \rvert^2} \cdot \\
\int_{\mathbb{R}^{d}}\sum_{b \in \left(B^{-1}\right)^T\mathbb{Z}^m}\left(\sum_{f,g \in F} e(\pi_d(f - g) \cdot y + \pi_m(f - g) \cdot b)\right)\phi(y,b)\, dy.
\end{multline}
We conclude that the diffraction measure $\widehat\gamma$ of a {\it possibly} degenerate ideal crystal exists, is unique, and is given by \textup(\ref{diff}\textup).
\end{theorem}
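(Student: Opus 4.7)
The plan is to proceed in two stages: first, compute the autocorrelation measure $\gamma$ directly from Hof's recipe of Section \ref{auto section}, using averaging regions adapted to the reduced rank of $\Lambda$; second, compute its Fourier transform $\widehat{\gamma}$ by dualizing to test functions and invoking the extended Poisson Summation Formula of Theorem \ref{LatticegeneralPSF}.

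For the autocorrelation, every $x \in X$ has the form $T(a) + f_j$ with $a \in \Lambda$ and $1 \leq j \leq |F|$, so
\[ X - X = \bigl\{\, T(a) + (f_j - f_k)\, :\, a \in \Lambda,\ 1 \leq j, k \leq |F|\, \bigr\}, \]
counted with multiplicity, since distinct triples $(a,j,k)$ may realize the same displacement and the corresponding weights add at that point. Because $T(\Lambda)$ has rank only $m$ inside $\mathbb{R}^{m+d}$, the symmetric cube $C_L \subset \mathbb{R}^{m+d}$ of side $L$ from Section \ref{auto section} gives $N_L(a^*) = O(L^m)$, which vanishes after dividing by $L^{m+d}$. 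I would therefore replace $C_L$ by the rectangular averaging region
\[ R_L := [-L/2,\, L/2]^m \times [-C/2,\, C/2]^d, \]
with $C$ chosen as in the statement so that the perpendicular components of all finitely many differences $f_j - f_k$ fit safely inside the factor $[-C/2,C/2]^d$. For each fixed displacement $a^*$ and each triple $(a,j,k)$ that realizes it, the number of base points $a_1 \in \Lambda$ for which both $T(a_1)+f_k$ and $T(a_1)+T(a)+f_j$ lie in $R_L$ equals $L^m/|\det(B)| + O(L^{m-1})$, by the standard lattice-point counting estimate in $\mathbb{R}^m$. Dividing by the volume $|R_L| = C^d L^m$ and passing to $L \to \infty$ yields the weight $1/(C^d|\det(B)|)$ for every such triple; summing these weights in the vague sense produces exactly (\ref{autoc}), and the limit is unique because the direct argument does not require passing to a subsequence.

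For the diffraction measure, apply $\widehat{\gamma}(\phi) = \gamma(\widehat\phi)$ to test functions $\phi \in \mathcal{S}(\mathbb{R}^m \oplus \mathbb{R}^d)$, interchange the finite sum over $(j,k)$ with the sum over $a \in \Lambda$, and split each $f_j - f_k \in \mathbb{R}^{m+d}$ into its first $m$ coordinates $\alpha_{jk}$ and its last $d$ coordinates $x_{jk}$, so that $\widehat\phi(T(a) + f_j - f_k) = \widehat\phi(a + \alpha_{jk},\, x_{jk})$. The resulting inner sum $\sum_{a \in \Lambda}\widehat\phi(a+\alpha_{jk},x_{jk})$ is now precisely of the form handled by Theorem \ref{LatticegeneralPSF} (applied to the Schwartz function $\widehat\phi$). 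A single invocation of that theorem, followed by Fourier inversion and the symmetry substitution $b \mapsto -b$, $y \mapsto -y$ (permissible because both $(B^{-1})^T\mathbb{Z}^m$ and $\mathbb{R}^d$ are symmetric under negation), collapses the double Fourier transform back to $\phi$ inside the integral, produces a second factor of $1/|\det(B)|$, and yields dual-lattice exponential weights of the form $e(T(b)\cdot(f_j-f_k))$. Assembling the finitely many $(j,k)$ contributions and combining the two normalizing constants gives the overall factor $1/(C^d|\det(B)|^2)$ in front, matching (\ref{diff}).

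The central conceptual step is the choice of averaging region in the first stage: the reduced rank of $\Lambda$ forces the $\mathbb{R}^d$ directions to be averaged over a fixed-size window rather than a growing one, and this choice is exactly what produces the normalizing constant $C^d|\det(B)|$. Beyond this, the work is essentially bookkeeping: the vague convergence reduces to a routine boundary-layer estimate, and in the second stage the interchange of sums and integrals together with the Fourier sign conventions are straightforward once $\widehat\phi \in \mathcal{S}$ is observed. That $\widehat{\gamma}$ is genuinely a tempered measure --- which is what legitimizes the diffraction interpretation --- follows from \cite[Lemma 2.3]{lagarias2000mathematical}, applied exactly as in Lagarias' full-rank case.
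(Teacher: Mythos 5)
Your proposal is correct and follows essentially the same route as the paper: the paper likewise replaces the growing cube by a parallelepiped whose first $m$ sides have length $L$ and whose remaining $d$ sides are fixed at $C$ (volume $C^dL^m$), obtains (\ref{autoc}) by running Lagarias' counting argument for \cite[Theorem 2.7]{lagarias2000mathematical} in that region, and then derives (\ref{diff}) from the extended Poisson Summation Formula of Theorem \ref{LatticegeneralPSF}. You simply spell out the lattice-point count and the dualization step in more detail than the paper, which delegates them to Lagarias' proof.
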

\begin{proof}
We compute the unique generalized autocorrelation measure (\ref{autoc}) following a similar approach as the one outlined in \cite[Theorem 2.7]{lagarias2000mathematical}. Specifically, as explained in the text just after Remark \ref{remark full}, we use the formula (\ref{auto}) but replace the term $n_a$ with (\ref{nCa}). As noted by Lagarias, the points of $X - X$ coincide with $T(\Lambda) + (F - F)$ and we get
\[ n_{C,a} = \frac{1}{(2C)^d\lvert \det(B) \rvert} \]
for all $a \in X - X$. To prove (\ref{diff}), we express $\widehat\gamma(\phi)$ as follows,
\begin{align*}
\widehat\gamma(\phi) = \gamma(\widehat\phi) &= \frac{1}{(2C)^d\lvert \det(B) \rvert}\sum_{a \in \Lambda} \left(\sum_{f,g \in F} \widehat\phi(T(a) + (f - g)) \right) \\
&= \frac{1}{(2C)^d\lvert \det(B) \rvert}\sum_{a \in \Lambda} \left(\sum_{f,g \in F} \widehat\phi(\pi_d(f - g),a + \pi_m(f - g)) \right).
\end{align*}
Then Theorem \ref{LatticegeneralPSF} yields the result.
\end{proof}

\begin{proposition} \label{idealLattice}
The set of complex dimensions of any lattice self-similar fractal string is a rank-1 generalized ideal crystal in $\mathbb{C} \simeq \mathbb{R}^2$.
\end{proposition}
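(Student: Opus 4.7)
The plan is to identify the set of complex dimensions with a translate of a one-dimensional lattice embedded in $\mathbb{R}^2$, appealing directly to the structural description of the roots of a lattice Dirichlet polynomial that is already recorded in the excerpt. Let $\mathcal{L}$ be a lattice self-similar fractal string, so that $\zeta_{\mathcal{L}}(s)$ is given by (\ref{extension}) with both numerator $L^s \sum_{k=1}^K g_k^s$ and denominator $1 - \sum_{j=1}^M r_j^s$ expressible as Dirichlet polynomials in the single complex variable $z = r^s$, where $r \in (0,1)$ is the common multiplicative generator of the scaling ratios (and, in the lattice case, also of the gap lengths, up to the appropriate convention). The set $\mathcal{D}_\mathcal{L}$ of complex dimensions is the set of poles of $\zeta_{\mathcal{L}}$, hence is the set of zeros of the denominator after removing those canceled by the zeros of the numerator.

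First, I would recall from the discussion surrounding equation (\ref{complex poly}) and from \cite[Theorem 3.6]{lapidus2012fractal} that the zero set $\mathcal{D}_f$ of the denominator can be written as
\begin{equation*}
\mathcal{D}_f \;=\; \bigcup_{j=1}^{k_N} H_j, \qquad H_j = \{\omega_j + in\mathbf{p} : n \in \mathbb{Z}\},
\end{equation*}
where $W(k_N) = \{\omega_1, \dots, \omega_{k_N}\}$ is the finite set of principal complex roots and $\mathbf{p} = 2\pi/\log r^{-1}$ is the oscillatory period. Next, I would observe that because both numerator and denominator are polynomials in $z = r^s$ and because the exponential map $s \mapsto r^s$ is periodic with period $\mathbf{p}i$, any cancellation between a zero of the numerator and a zero of the denominator at some $s_0 \in \mathbb{C}$ automatically cancels the entire vertical arithmetic progression $s_0 + \mathbf{p}i\mathbb{Z}$. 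Consequently, there exists a subset $J \subseteq \{1,\dots,k_N\}$ such that
\begin{equation*}
\mathcal{D}_\mathcal{L} \;=\; \bigcup_{j \in J} H_j \;=\; \mathbf{p}i\mathbb{Z} + \{\omega_j : j \in J\}.
\end{equation*}

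To conclude, I would apply the identification $\mathbb{C} \simeq \mathbb{R}^2$ chosen so that the imaginary axis is identified with the first coordinate axis (so that the lattice direction matches the convention of the embedding $T$ from (\ref{id})). Under this identification, $\Lambda := \mathbf{p}\mathbb{Z}$ is a rank-$1$ lattice in $\mathbb{R}$, the finite set $F := \{\omega_j : j \in J\} \subset \mathbb{R}^2$ serves as the translation set, and $\mathcal{D}_\mathcal{L} = T(\Lambda) + F$, which is exactly the form required by Definition \ref{genIdeal} with $m = 1$ and $d = 1$.

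The step I expect to require the most care is the bookkeeping for cancellations between numerator and denominator: I must verify that cancellations respect the vertical periodicity so that $\mathcal{D}_\mathcal{L}$ remains a finite union of full arithmetic progressions (rather than a sparser, potentially non-periodic subset). This follows cleanly from the observation that both numerator and denominator descend to polynomials in the same uniformizer $z = r^s$, but it should be stated explicitly so that the final expression for $\mathcal{D}_\mathcal{L}$ genuinely matches the template $T(\Lambda) + F$.
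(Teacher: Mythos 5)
Your proof is correct and follows essentially the same route as the paper's: both invoke the structure theorem for the roots of a lattice Dirichlet polynomial (\cite[Theorem 3.6]{lapidus2012fractal}) to express the complex dimensions as finitely many vertical arithmetic progressions with common period ${\bf p}$, and then identify this set with $T({\bf p}\mathbb{Z}) + W$ for a finite set $W$ of principal roots, i.e., a rank-$1$ ideal crystal in the sense of Definition \ref{genIdeal} with $m = 1$. Your additional check that numerator/denominator cancellations respect the vertical periodicity (because both descend to polynomials in the uniformizer $z = r^s$) is a point the paper's one-line proof passes over by citing the structure theorem directly; it is a worthwhile clarification, valid under the paper's stated convention that the logarithms of the gaps are integer multiples of $\log r^{-1}$, and it does not change the overall approach.
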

\begin{proof}
As was recalled in Subsection \ref{complex} above, for any lattice self-similar fractal string $\mathcal{L}$ with oscillatory period {\bf p}, it follows from \cite[Theorem 3.6]{lapidus2012fractal} that there exist finitely many complex numbers, $\omega_1, \omega_2, \dots, \omega_d$, such that the complex dimensions of $\mathcal{L}$ are given by
\[ \mathcal{D}_{\mathcal{L}} = \{\omega_u + in{\bf p}\colon n \in \mathbb{Z}, u = 1, \dots, d \}. \]
Hence, $\mathcal{D}_{\mathcal{L}}$ is a generalized ideal crystal of rank 1.
\end{proof}

In light of Proposition \ref{idealLattice}, the following key result is an immediate corollary of our main theorem (Theorem \ref{idealDiff} just above), in the special case when $m = d= 1$. (At this point, the reader may wish to review some of the notation and results recalled in Section \ref{Quasiperiodic patterns from complex dimensions} above.)  

\begin{corollary}[Autocorrelation and Diffraction Measures of the Complex Dimensions of a Lattice String] \label{complexDiff}
Let $\mathcal{L}$ be a lattice self-similar fractal string with oscillatory period ${\bf p}$, and with complex dimensions generated by
\[ W(k_N) = \{\omega_1, \omega_2, \dots, \omega_{k_N} = D\} \subset \mathbb{C}, \]
where $D = D_{\mathcal{L}}$ is the Minkowski dimension of $\mathcal{L}$.

Then, the autocorrelation measure $\gamma$ of the set of complex dimensions of $\mathcal{L}$ exists, is unique, and is given, for any $\phi \in \mathcal{S}(\mathbb{R}^2)$, by
\begin{equation}
\gamma(\phi) = \frac{1}{2C{\bf p}}\sum_{a \in {\bf p}\mathbb{Z}} \left(\sum_{j = 1}^{k_N}\sum_{k = 1}^{k_N} \phi(\mathop{\textup{Re}(\omega_j - \omega_k),a + \mathop{\textup{Im}}(\omega_j - \omega_k)})\right),
\end{equation}
where $C = \max\{D,D_l\}$.

The diffraction measure $\widehat{\gamma}$ of the set of complex dimensions also exists, is unique, and is given by
\begin{equation} 
\widehat\gamma(\phi) = \frac{1}{2C{\bf p}^2}\sum_{b \in {\bf p}^{-1}\mathbb{Z}}\int_{\mathbb{R}}\left(\sum_{j = 1}^{k_N}\sum_{k = 1}^{k_N} e(\mathop{\textup{Re}}(\omega_j - \omega_k)y + \mathop{\textup{Im}}(\omega_j - \omega_k)b)\right)\phi(y,b)\, dy, 
\end{equation}
still for any $\phi \in \mathcal{S}(\mathbb{R}^2)$.
\end{corollary}

\section{Acknowledgments}
The research of M. L. L. was supported by the Burton Jones Endowed Chair in Pure Mathematics, as well as by the NSF award DMS-1107750.  We thank Alan Haynes for his ideas and suggestions. We also thank the reviewers for their constructive and helpful comments, which have significantly helped us to improve the quality and clarity of our manuscript.

\end{document}